\documentclass[10pt,twocolumn,prl,nofootinbib]{revtex4-2}
\usepackage{amsmath}
\usepackage{lipsum} % for some text
\usepackage{verbatim}
\usepackage{epsfig}
\usepackage{subfigure}
\usepackage{graphicx}
\usepackage{enumitem}
\usepackage{amsfonts}
\usepackage[final]{showlabels}
\usepackage{enumitem}
\usepackage[figuresright]{rotating}
\usepackage{amssymb}
\usepackage{amsmath}
\usepackage{psfrag}
\usepackage{mathtools}
\usepackage[export]{adjustbox}
\usepackage{bm}% bold math
\usepackage[colorlinks,linkcolor=blue,anchorcolor=blue,citecolor=blue,urlcolor=blue]{hyperref}
\usepackage[version=4]{mhchem}
\usepackage[svgnames]{xcolor}
\newtheorem{proposition}{Proposition}%[section]

\def\be{\begin{equation}} \def\ee{\end{equation}}
\def\bea{\begin{eqnarray}} \def\eea{\end{eqnarray}}

\def\bpm{\begin{pmatrix}} \def\epm{\end{pmatrix}}

\newcommand{\bigO}{\mathcal{O}}

\makeatletter
\newcommand*{\balancecolsandclearpage}{%
	\close@column@grid
	\clearpage
}
\makeatother

\begin{document}
	\title{Redundancy from Subsystem Thermalization}
	\author{Xiangyu Cao} 
	\affiliation{Laboratoire de Physique de l'\'Ecole normale sup\'erieure, ENS, Universit\'e PSL, CNRS, Sorbonne Universit\'e, Universit\'e Paris Cit\'e, F-75005 Paris, France}
	\author{Zohar Nussinov}
	
	\affiliation{Physics Department, Washington University in St. Louis, St. Louis, MO 63141, USA}
	\affiliation{LPTMC, CNRS-UMR 7600, Sorbonne Universit\'e, 4 Place Jussieu, 75252 Paris cedex 05, France}

	\date{\today}
	\begin{abstract}
		In the theory of decoherence, redundancy is the correlation between a quantum system and  fractions of the environment. It underlies the emergence of classical behavior. We show that redundancy can persist despite thermalizing dynamics in the environment. This follows an initial broadcasting interaction that changes the density of a conserved quantity. The mutual information between the system and a fraction of the environment is estimated using the large deviation principle governing subsystem thermalization.
	\end{abstract}
	\maketitle
	
	How the classical realm emerges from the laws of quantum mechanics seems to be a timeless question. In the past decades, a fruitful approach has been the decoherence idea~\cite{zeh-71,zeh-73,zurek-81,zurek-deco-review,zeh-review,SCHLOSSHAUER20191}, and its modern refinements, such as Quantum Darwinism~\cite{ollivier-poulin,zurek-QD,zurek-review,Korbicz-rev}. At the heart of this approach lies the notion of \textit{redundancy}, defined as the correlation between a quantum system with multiple small fractions of its environment. Quantitatively, redundancy can be characterized by the independence of the system-fraction mutual information on the fraction's size, often literally captured (as in  Fig. \ref{fig:example} for the example to be discussed in the current work) by a  ``redundancy plateau.'' 
	Conceptually, redundancy can be viewed as underlying the emergence of an ``objective fact,'' in the sense that some information about the system can be attested by multiple observers. Implications on conceptual issues of quantum mechanics, such as the measurement problems, have been amply discussed~\cite{Tomaz-philo,zurek-philo18}.
	
	What kind of quantum dynamics produces redundancy? This question is timely as direct probes of redundancy have become possible in quantum simulation platforms~\cite{unden19-darwin-exp,mondani-darwism}. It is also theoretically nontrivial, as quantum dynamics generically scrambles information, instead of broadcasting it~\cite{deutsch,srednicki,rigol-review}. One might expect some fine-tuning is necessary to maintain redundancy. Indeed, many examples exhibiting redundancy consist of the system broadcasting to non-interacting environment parts~\cite{QD-QBM,girolami,campbell,collision-model,Ryan-onion}. Riedel, Zurek, and Zwolak~\cite{riedel2012} showed an example where redundancy, initially established by the system-environment interaction, fades away due to the environment's intrinsic dynamics. Deffner and collaborators~\cite{duruisseau,deffner} examined a large class of Hamiltonian systems, and concluded that redundancy limits drastically intra-environment interaction. One of us~\cite{fertecaoprl,fertecaopra,ferte2025decoherent} found robust redundancy, yet with an expanding environment incorporating more and more qubits. 
	
	In this Letter, we point out a generic and robust mechanism of redundancy with an interacting environment. We assume the environment to be governed by a non-integrable many-body Hamiltonian, so that its intrinsic dynamics leads to thermalization. We shall show redundancy generically arises in the long time and large-environment limit, after a ``broadcasting'' interaction with the system.  
	Such an interaction yields a Shr\"odinger-cat like superposition of states with different energy densities \cite{nussinov20}. Thermalization converts these states into locally distinguishable macrostates which are redundantly imprinted in the environment. Our result does not contradict Ref.~\cite{riedel2012} which analyzed a fine-tuned exceptional case where the states shared the same  energy density and thus  thermalized to the same macro-state, resulting in an ``encoding" behavior. As a technical contribution, we quantitatively relate redundancy to a large-deviation form of subsystem thermalization~\cite{dymarsky-subsystem-eth,rigol-review,touchette-ldp}.

	\begin{figure}
		\centering
		\includegraphics[width=\columnwidth]{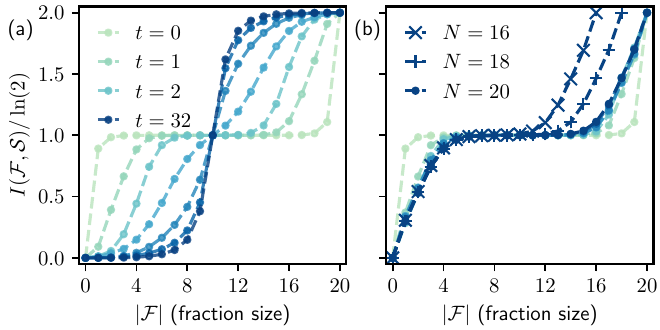}
		\caption{
			Mutual information $I(\mathcal{F}, \mathcal{S})$ between the system and a 
			fraction $\mathcal{F}$ of the environment, as a function of its size $|\mathcal{F}|$, 
			after a broadcasting interaction \eqref{eq:H-int} (with $t_0 = \pi / 4\lambda \times 0.75$), 
			followed by intrinsic environment evolution under \eqref{eq:HE-ex} for 
			$t = 1, 2, 4, \dots, 32$ (light to dark). 
			(a) The initial redundancy plateau crosses over to encoding behavior due to thermalization. 
			(b) Upon replacing $Z_j$ by $Y_j$ in \eqref{eq:H-int}, the redundancy plateau persists at late times and the approach to it does not depend on the environment size $N$.  }\label{fig:example}
	\end{figure}
	\textit{Basic example.}-- We illustrate our point in a concrete example. We consider the system to be a qubit (spin-$1/2$), and the environment to be made of $N \gg 1$ qubits. A well-known way to generate redundancy is the following. We initialize the system-environment pair in the product state 
	\begin{equation}
		| \Psi(0) \rangle = | + \rangle_{\mathcal{S}} \otimes | + \rangle^{\otimes N}_{\mathcal{E}}, 
	\end{equation}
	where $ | + \rangle = (| 0 \rangle + | 1 \rangle) / \sqrt{2}$.  Then we let them interact with the broadcasting Hamiltonian~\cite{blume-kohout-2005,riedel2012,duruisseau,girolami,nussinov20},
	\begin{equation} \label{eq:H-int}
		H_{\text{int}} =  - \lambda Z_\mathcal{S}  \sum_{j=1}^N Z_j,
	\end{equation}
	where $Z_\mathcal{S}$ is the Pauli $Z$ operator acting on $\mathcal{S}$, and $Z_j$ the Pauli $Z$ operator acting on the $j$th environment qubit. After a time $t_0$ such that $\lambda t_0 = \pi / 4$, the system and its environment are described by the state 
	\begin{equation}  \label{eq:Psi-init}
		| \Psi \rangle = \sum_{a=0,1} c_a | a \rangle_{\mathcal{S}}   | \Phi_a \rangle_{\mathcal{E}}.
	\end{equation}
	Here, $c_a = 1 / \sqrt{2}$ and $| \Phi_a \rangle = | \varphi_{a} \rangle^{\otimes N}$.
	The states $|\varphi_0\rangle$ and $|\varphi_1\rangle$ are, respectively, $|+_y\rangle$ and $|-_y\rangle$ (the latter being polarized in opposite $y$ directions). The  state $| \Psi \rangle$ exhibits perfect redundancy. Each environment spin is perfectly correlated with ${\mathcal{S}}$. A single spin suffices to determine the system state.
	The redundancy is quantitatively captured by the mutual information between an environment fraction $\mathcal{F}$ of $n$ qubits and $\mathcal{S}$,  
	\begin{equation*}
		I (\mathcal{F}, \mathcal{S}) = H(\mathcal{F}) + H( \mathcal{S}) - H( \mathcal{F}  \mathcal{S}) 
	\end{equation*}
	where $H(\mathcal{X}) = \mathrm{tr}[-\rho_{\mathcal{X}}\ln \rho_{\mathcal{X}}]$ is the von Neumann entropy.  It is straightforward to see that $I(\mathcal{F}, \mathcal{S})  = \ln 2$ for any relative fraction size $f := n / N \in (0, 1)$: We have a redundancy plateau. In fact, there is a redundancy plateau in the $ N \to \infty $ limit for any $ \lambda t_0 \ne k \pi / 2$. 
	
	Next, we turn off the interaction and let the environment evolve under the chaotic Ising Hamiltonian
	\begin{equation} \label{eq:HE-ex}
		H_{\mathcal{E}} = - \sum_{j=1}^N \left(  J Z_j Z_{j+1} + h_X X_j + h_Z Z_j \right),
	\end{equation}
	where $J = 1, h_Z = 1.205, h_X = 0.945$, while $\mathcal{S}$ does not evolve. We then compute the mutual information between an interval $\mathcal{F}$ and $\mathcal{S}$; by translation invariance this only depends on the size $ n = |\mathcal{F}|$, not on its position. In Figure~\ref{fig:example}-(a), we observe that the redundancy plateau gradually disappears and is replaced by an encoding ``jump'': $I(\mathcal{F}, \mathcal{S}) \to 0$ for all $n < N/2$ while $I \to 2 \ln 2$ as soon as $n > N/2$. Namely, no correlation persists between any small fraction and the system. We have thus reproduced the redundancy-encoding crossover observed in Ref.~\cite{riedel2012} in a similar setting. 
	
	However, if we repeat the numerical test with $Z_j$ replaced by $Y_j$ in the broadcasting interaction \eqref{eq:H-int}, the initially established redundancy persists despite the chaotic evolution, see Fig.~\ref{fig:example}-(b).
	After an initial transient, $I(\mathcal{F}, \mathcal{S})$ becomes time-independent, and tends to $\ln 2$ exponentially with increasing $n$.
	The mutual information $I(\mathcal{F}, \mathcal{S})$ is also independent of the total environment size $N$. So, in the $N \to \infty$ limit, there is a perfect redundancy plateau, $ I(\mathcal{F}, \mathcal{S}) \approx \ln 2$ for any relative size $ 0 < n/N  < 1$. 
	
	The distinct fates of redundancy stem from the environment's energy density in the different branches of the state~\eqref{eq:Psi-init}:
	\begin{equation} \label{eq:e-average}
		\epsilon_a =  \langle   \Phi_{a}  | H_{\mathcal{E}}	|  \Phi_{a} \rangle / N. 
	\end{equation}
	It is straightforward to check that $\epsilon_0 = \epsilon_1$ with the interaction \eqref{eq:H-int} unmodified. By replacing $Z_j$ by $Y_j$, the energy densities become distinct $\epsilon_0 \ne \epsilon_1$. The former situation is exceptional and the latter {generic}: if we replace $Z_j$ by $x X_j + y Y_j + z Z_j$ where $(x, y, z)$ is drawn randomly on the unit sphere, then $\epsilon_0 \ne \epsilon_1$ with probability one. Next, we shall focus on the generic case and explain the above observation analytically and in more general setup. 
	
	\textit{Redundancy and subsystem thermalization.}--
	As a general setup, we now  consider a system with a $d$-dimensional Hilbert space and computational basis $|a\rangle$, $a = 0, \dots, d-1$. The environment has $N \gg 1$ degrees of freedom and evolves under a non-integrable local Hamiltonian $H_{\mathcal{E}}$, with energy as the only conserved quantity.
	Assume that initial system-environment interaction established a state,
	\begin{equation} 
		| \Psi \rangle =  \sum_{a=0}^{d-1} c_a | a \rangle_{\mathcal{S}} | \Phi_a \rangle_{\mathcal{E}}
	\end{equation}
	where $|c_a|^2 > 0$ for all $a$,  $\sum_i |c_a|^2 = 1$ and $| \Phi_a \rangle$ are short-range correlated states such that the energy densities $
	\epsilon_a$, as defined in \eqref{eq:e-average}, are pairwise distinct. Time evolution of the environment gives
	\begin{equation} \label{eq:phit}
		| \Psi(t) \rangle =  \sum_{a=1}^d c_a | a \rangle_{\mathcal{S}}  | \Phi_a(t) \rangle_{\mathcal{E}} ,~   | \Phi_a(t) \rangle  = e^{-i H_{\mathcal{E}} t} | \Phi_a \rangle.
	\end{equation}  
	We wish to estimate the mutual information $I(\mathcal{F}, \mathcal{S})$ for the state $ | \Psi(t) \rangle$, and for a fraction with size $n = | \mathcal{F} |$, under a subsystem thermalization hypothesis~\cite{dymarsky-subsystem-eth,rigol-review,touchette-ldp}: 
	
	\noindent	\textbf{Assumption}.
	For any $a$, the probability distribution of the energy of the fraction, 
	with respect to the state $|\Phi_a(t)\rangle$, satisfies a large deviation principle with rate function $f_a(\epsilon) \geq 0$,
	\begin{equation} \label{eq:large-deviation}
		p_a(\epsilon) := \langle \Phi_a(t) | \delta(H_{\mathcal{F}} - \epsilon n ) | \Phi_a(t) \rangle  \sim e^{- n f_a(\epsilon)}.
	\end{equation}
	Here, $ H_{\mathcal{F}} $ is the Hamiltonian restricted to $\mathcal{F}$, and $X \sim e^{y n}$ means $ n^{-1} \ln X \to y$ as $n \to \infty$. Strictly speaking, the Dirac delta  function in \eqref{eq:large-deviation} is  smeared, so that $p_a(\epsilon)$ is smooth. The $n \to \infty$ limit is independent of the 
	smearing width, provided it is held fixed.  Equivalently, 
	the large deviation principle can be  formulated in terms of the generating function~\cite{ellis,gartner}:
	\begin{equation} \label{eq:lambda}
		\langle \Phi_a(t) | e^{k H_{\mathcal{F}} } | \Phi_a(t) \rangle  \sim e^{n \lambda_a(k)}, 
	\end{equation}
	where  $\lambda_a$ is the Legendre transform of $f_a$, $\lambda_a(k) = \max_{\epsilon} (k \epsilon - f_a(\epsilon))$. 
	
	To motivate the assumption, note that the energy distribution of the whole environment, that is  \eqref{eq:large-deviation} with $\mathcal{F} = \mathcal{E}$, is conserved 
	and fixed by the initial condition $|\Phi_a\rangle$.   Not only the mean energy, but also all cumulants of the energy distribution are conserved. As long as $  | \Phi_a (t) \rangle$ has sufficiently short-range correlation, the latter cumulants are extensive quantities. Our assumption amounts to saying that all cumulant densities are the same in large enough subsystems; hence, it is a natural generalization of the equipartition law. Note that $\lambda_a(k)$ in \eqref{eq:lambda} is the cumulant generating function density.
	
	Combining the above assumptions and basic quantum information theory, we may show 
	\begin{proposition}   \label{prop:main}
		For any $\alpha < \alpha_*$ where
		\begin{align} \label{eq:alpha}
			&	 \alpha_* =  \min_{a \ne b} \min_{\epsilon} \max(f_a(\epsilon), f_b (\epsilon)), 
		\end{align} 
		there is $C > 0$ such that for any fraction $|\mathcal{F}|$ of size $n$, 
		\begin{equation}  \label{eq:plateaumain}
			H(\mathcal{S})  - C e^{-  \alpha n }  \le	I(\mathcal{F}, \mathcal{S})  \le  H(\mathcal{S})  +  C e^{-  \alpha (N - n) }. 
		\end{equation}
	\end{proposition}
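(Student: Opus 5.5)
The plan is to reduce the mutual information to distinguishability of the branch states on $\mathcal{F}$ (lower bound) and on the complement $\bar{\mathcal{F}}=\mathcal{E}\setminus\mathcal{F}$ (upper bound). Then the large deviation assumption \eqref{eq:large-deviation} controls the overlaps.

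\textbf{Reduced states.} For the state \eqref{eq:phit}, $\rho_{\mathcal{F}\mathcal{S}} = \sum_{a,b} c_a c_b^* |a\rangle\langle b| \otimes \mathrm{tr}_{\bar{\mathcal{F}}} |\Phi_a(t)\rangle\langle\Phi_b(t)|$. Write $\rho_a = \mathrm{tr}_{\bar{\mathcal{F}}}|\Phi_a(t)\rangle\langle\Phi_a(t)|$. A standard inequality gives $I(\mathcal{F},\mathcal{S}) \ge \chi$, where $\chi = H(\sum_a |c_a|^2 \rho_a) - \sum_a |c_a|^2 H(\rho_a)$ is the Holevo quantity of the ensemble $\{|c_a|^2,\rho_a\}$. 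This follows by dephasing $\mathcal{S}$ in the computational basis (a local channel on $\mathcal{S}$, so monotonicity applies), which turns $\rho_{\mathcal{F}\mathcal{S}}$ into the cq-state whose mutual information is $\chi$. For the upper bound, purity of $|\Psi(t)\rangle$ gives $H(\mathcal{F}\mathcal{S}) = H(\bar{\mathcal{F}})$ and $H(\mathcal{F}) = H(\bar{\mathcal{F}}\mathcal{S})$. Hence $I(\mathcal{F},\mathcal{S}) = 2H(\mathcal{S}) - I(\bar{\mathcal{F}},\mathcal{S})$. Applying the lower bound to $\bar{\mathcal{F}}$, of size $N-n$, yields the right inequality of \eqref{eq:plateaumain}. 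So it suffices to show $\chi \ge H(\mathcal{S}) - Ce^{-\alpha n}$. Here $H(\mathcal{S}) = -\sum_a |c_a|^2\ln|c_a|^2$, since $\rho_\mathcal{S}$ is diagonal with entries $|c_a|^2$ because $\langle\Phi_b|\Phi_a\rangle = 0$ for $a\neq b$ (the branches are orthogonal, having distinct energy distributions at large $N$; the exact value of $H(\mathcal{S})$ will not matter below).

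\textbf{Distinguishability via energy projectors.} Partition the energy axis into disjoint windows $W_a$, using the measurement $\{P_a = \mathbf{1}_{W_a}(H_{\mathcal{F}}/n)\}$ on $\mathcal{F}$. Take $W_a$ to be the set of $\epsilon$ where $f_a(\epsilon)$ is the smallest among the $f_b(\epsilon)$, with ties broken arbitrarily. For $\epsilon \notin W_a$ there is $b\neq a$ with $f_b(\epsilon)\le f_a(\epsilon)$, so $f_a(\epsilon) = \max(f_a,f_b) \ge \alpha_*$. Therefore the error $\mathrm{tr}(\rho_a(1-P_a)) = \int_{\epsilon\notin W_a} p_a(\epsilon)\,d\epsilon \lesssim e^{-n\alpha}$ for any $\alpha<\alpha_*$. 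Making this rigorous needs uniformity of the LDP: I would use the generating function form \eqref{eq:lambda} with a Chernoff bound to get genuine upper bounds rather than only $\sim$ asymptotics. The smearing of the delta function is absorbed by shrinking $\alpha$ slightly.

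\textbf{From small error to Holevo.} By data processing, $\chi$ is at least the classical mutual information between $a$ and the outcome of $\{P_b\}$. The channel $a\to b$ has error at most $\delta = (d-1)Ce^{-\alpha n}$, so Fano's inequality (or a direct computation) gives $\chi \ge H(\{|c_a|^2\}) - h(\delta) - \delta\ln d$. This bound is only $\delta\ln(1/\delta)$, which costs a factor $n$; absorbing it by decreasing $\alpha$ infinitesimally still gives $Ce^{-\alpha n}$ for every $\alpha<\alpha_*$, adjusting $C$.

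The main obstacle is the second step: turning the asymptotic LDP statement into uniform, non-asymptotic tail bounds for all $n$, handling possibly nonconvex $f_a$ and the window boundaries. Resolving it requires taking $C$ large enough to cover small $n$, where the bounds are trivial.
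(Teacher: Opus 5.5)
Your proposal is correct. Its outer structure matches the paper's: a data-processing lower bound, where your Holevo quantity is exactly what one gets after dephasing $\mathcal{S}$, and the purity identity $I(\mathcal{F},\mathcal{S})+I(\mathcal{E}\setminus\mathcal{F},\mathcal{S})=2H(\mathcal{S})$ applied to the complement for the upper bound.

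The genuine difference is in the key estimate. The paper dephases $\mathcal{F}$ in the full eigenbasis of $H_{\mathcal{F}}$, so $I'$ is the Shannon mutual information between $a$ and the energy density $\epsilon$. It then evaluates the deficit $\sum_a|c_a|^2\int d\epsilon\, p_a\ln\bigl(1+\sum_{b\neq a}|c_b|^2p_b/(|c_a|^2p_a)\bigr)$ by a Laplace-type analysis of the likelihood ratios $p_b/p_a\sim e^{-n(f_b-f_a)}$. This gives a two-sided statement: the classical deficit is $\sim e^{-\alpha_* n}$, so $\alpha_*$ is the exact rate for the dephased quantity. The cost is that it needs pointwise, density-level control of the ratios, including lower bounds on $p_a$, which the paper treats only at the heuristic level of ``$\sim$''.

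You instead coarse-grain further to a $d$-outcome test with decision regions $W_a=\{\epsilon: f_a(\epsilon)\text{ minimal}\}$, whose error exponent $\min_a\inf_{W_a^c}f_a$ is exactly $\alpha_*$, and finish with Fano. This only needs LDP upper bounds on probabilities of closed sets, which is much easier to make rigorous. Exponential tightness is automatic here, since $\|H_{\mathcal{F}}\|=O(n)$. The price is a one-sided bound with an extra factor $\sim n$ from $h(\delta)\approx\delta\ln(1/\delta)$, which is harmless for ``any $\alpha<\alpha_*$''.

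One caveat on your rigor plan: a Chernoff bound through $\lambda_a$ only controls the convex envelope of $f_a$. For nonconvex $f_a$ you should use the closed-set LDP upper bound directly rather than the generating function.

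A small correction: $\langle\Phi_b|\Phi_a\rangle$ is only exponentially small in $N$, not zero. In the basic example it equals $\cos(2\lambda t_0)^N$, which is nonzero for the $t_0$ used in the numerics. So the equality $H(\mathcal{S})=-\sum_a|c_a|^2\ln|c_a|^2$ is false at finite $N$. What you actually need, and what always holds, is $H(\mathcal{S})\le-\sum_a|c_a|^2\ln|c_a|^2$: the diagonal of $\rho_{\mathcal{S}}$ in the computational basis is $\{|c_a|^2\}$, and dephasing cannot lower entropy. This is precisely the direction used in your lower bound and when you feed it into the complement argument, as in the paper.
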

	The proof (see End Matter) proceeds as  follows. The quantum mutual information $I(\mathcal{F}, \mathcal{S}) $ is bounded from below by the classical mutual information between the subsystem energy in $\mathcal{F}$ and $a$, the outcome of a projective measurement of $\mathcal{S}$ in the computational basis. These variables are strongly correlated, as the total energy is typically close to $\epsilon_a n$. This gives a large classical mutual information close to the Shannon entropy of the distribution $\{|c_a|^2\}_{a=0}^{d-1}$, which we may show is close to  $H(\mathcal{S})$, 
	\begin{equation} \label{eq:HS-main}
		H(\mathcal{S})  =  - \sum_a |c_a|^2  \ln |c_a|^2 + \bigO( e^{- \gamma N}), \ \gamma > 0. 
	\end{equation} 
	In the above example, $c_0 = c_1 = 1/\sqrt{2}$, which gives $ H(\mathcal{S}) \approx \ln 2$, as observed in Fig.~\ref{fig:example}-(b).  
	
	\begin{figure}
		\centering
		\includegraphics[width=\columnwidth]{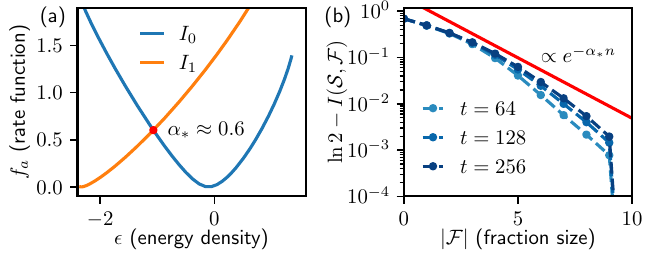}
		\caption{(a) Conserved energy density large deviation rate functions \eqref{eq:large-deviation} in the example of Fig.~\ref{fig:example}-(b). The intersection point gives $\alpha_*$ in \eqref{eq:alpha}.  (b) Approach the plateau of $I(\mathcal{F}, \mathcal{S})$ [same condition as Fig.~\ref{fig:example}-(b), $N = 20$, larger $t$] compared to the bound \eqref{eq:plateaumain}.} \label{fig:estimate}
	\end{figure}	
	Eq.~\eqref{eq:plateaumain} implies that the mutual information approaches the plateau value at least exponentially fast in the fraction size $n/N$. The exponent $\alpha_*$ depends on the large deviation rate functions $f_a(\epsilon)$. In fact, assuming convexity of $f_a$, for any $a \ne b$, the minimum $\min_{\epsilon} \max (f_a(\epsilon), f_b(\epsilon))$ is obtained at the unique locus $\epsilon_*$ between $\epsilon_a$ and $\epsilon_b$ where $f_a(\epsilon_*)= f_b(\epsilon_*) = \alpha_*$. As an illustration, in Fig.~\ref{fig:estimate} we plot the numerical estimated rate functions $f_a$ and determine $\epsilon_*$ graphically.
	We then verify that $|\ln 2 - I(\mathcal{S}, \mathcal{F})|$ approaches zero faster 
	than $e^{-\alpha_* n}$, confirming that the mutual information reaches its long-time 
	plateau exponentially fast in $n$. This bound becomes tight at late times. Note that the correction to the plateau is dominated by an atypical energy density 
	$\epsilon_*$, which differs from the typical value $\epsilon_a$ for all $a$. This is the reason why we needed the large deviation form of subsystem thermalization hypothesis.  
	
	\textit{Redundancy number.}-- An application of Proposition \ref{prop:main} is to estimate the redundancy number $R_\delta$, defined as the number of fractions whose mutual information with the system is equal the plateau value, up to a tolerance $\delta$,
	\begin{equation}
		\label{red_number}
		R_{\delta} := \max_{	I(\mathcal{F}, \mathcal{S}) \ge	H(\mathcal{S}) - \delta } \left(  | \mathcal{E} | / | \mathcal{F} |  \right) 
	\end{equation}
	From \eqref{eq:plateaumain}, we find that the redundancy number is extensive, that is, scales linearly with the environment size as the latter increases, 
	\begin{equation}
		R_{\delta} \gtrsim \frac{ \alpha_* }{\ln (1/\delta) }  | \mathcal{E} |  . 
	\end{equation}
	Observe that the proportionality constant in this inequality decreases very slowly 
	as $\delta \to 0$: there exists a large number of fractions with a high-quality 
	redundant copy. 
	
	\begin{figure}
		\centering\includegraphics[width=\columnwidth]{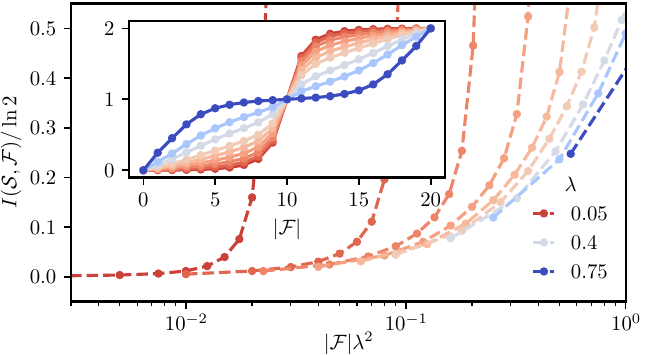}
		\caption{System-fraction mutual information $I(\mathcal{F}, \mathcal{S})$ in an interpolation between Fig.~\ref{fig:example}-(a) and (b). The parameters are the same as in Fig.~\ref{fig:example}, except that $Z_j \to Z_j \cos(\lambda \pi / 2) + Y_j \sin(\lambda \pi / 2)$ in~\eqref{eq:H-int}. In the main plot, we show the data collapse as a function of $\lambda^2  |\mathcal{F}|$. In the inset, we show the raw data.}\label{fig:crossover}
	\end{figure}
	
	\textit{Degeneracy}.-- 
	Proposition~\ref{prop:main} is useful for interpreting finite-size numerical data, 
	particularly when nearly degenerate energy densities obscure the redundancy plateau.  
	When two energy densities merge, $\epsilon_{0, 1} = \bar{\epsilon} \pm \delta \epsilon$ with $\delta \epsilon \to 0$, 
	and assuming that $f''_{a}(\epsilon_a) = c + \bigO(\delta \epsilon)$ for both $a = 0, 1$,  \eqref{eq:alpha} implies  $\alpha_* \le \frac{c}2 (\delta \epsilon)^2  + \bigO\left( (\delta \epsilon)^3 \right)$. Hence, for $c > 0$, a redundancy plateau appears only for
	\begin{equation}
		|\mathcal{F}| \gtrsim n_* :=  1/ (\delta \epsilon)^2  . \label{eq:scaling-degen} 
	\end{equation}
	To verify this scaling, we interpolate between the two examples in Fig.~\ref{fig:example}  by replacing $Z_j$ with $Z_j \cos(\lambda\pi/2) + Y_j \sin(\lambda\pi/2)$, $0 < \lambda < 1$, 
	in \eqref{eq:H-int}. Note that $\delta\epsilon \sim \lambda$ as $\lambda \to 0$. 
	As $\lambda$ varies, Fig.~\ref{fig:crossover} shows that $I(\mathcal{S}, \mathcal{F})$ 
	can exhibit a plateau, a ramp, or encoding behavior.
	Nevertheless, plotting $I(\mathcal{S}, \mathcal{F})$ as a function of 
	$\lambda^2 |\mathcal{F}| \sim (\delta\epsilon)^2 |\mathcal{F}|$ yields an excellent 
	data collapse for $|\mathcal{F}| \ll |\mathcal{E}|/2$, corroborating the scaling 
	law~\eqref{eq:scaling-degen}. This suggests that ramp-like behavior~\cite{deffner,duruisseau} 
	may signal a slow crossover to redundancy not yet visible at finite size. In End Matter we discuss a more complex situation where some (but not all) $\epsilon_a$ are degenerate.

	\textit{All-to-all environment.}-- We assumed that $H_{\mathcal{E}}$ is a local Hamiltonian. This choice differs from many works on quantum Darwinism which considered an  environment with random all-to-all $q$-local interaction. An example with $q = 2$ is 
	\begin{equation}
		H_{\mathcal{E}} =   \frac1{\sqrt{N}} \sum_{j < k}  \left( J^x_{jk} X_{j} X_{k} + J^z_{jk} Z_{j} Z_{k} \right),
	\end{equation}
	where $J^{\alpha}_{jk}$ are independent standard Gaussian random couplings. 
	(The Sachdev-Ye-Kitaev~\cite{sachdevye,kitaev} model is a $q=4$-local example with 
	fermions.) Observing redundancy in such environments is challenging for two reasons. First, 
	altering the energy density via a broadcasting interaction is not straightforward: 
	a spin coherent state has $\epsilon = 0$ in the $N \to \infty$ limit~\cite{scaffidi}. 
	Second, energy is stored non-locally. In systems with all-to-all $q$-local interactions, the energy of a fraction $\mathcal{F}$ of 
	size $n$, defined as the sum of intra-fraction couplings, scales as~\cite{zheng-liu-chen-syk,gu-huang-syk}
	\begin{equation}
		E_{\mathcal{F}} \sim n(n/N)^{q-1}.
	\end{equation}
	The factor $(n / N)^{q-1}$ is absent in short-range systems. 
	Thus, $E_{\mathcal{F}} \gtrsim 1$, with energy differences also $O(1)$, if and 
	only if $n \gtrsim n_* = N^{(q-1)/q}$. The redundancy number $R_{\delta}$ [defined in \eqref{red_number}] is subextensive, $R_{\delta} \sim N/n_* = N^{1/q}$, and the redundancy plateau is visible only for 
	$n_* \ll N/2$, which requires large $N$.
	
	\begin{figure}
		\centering
		\includegraphics[width=\columnwidth]{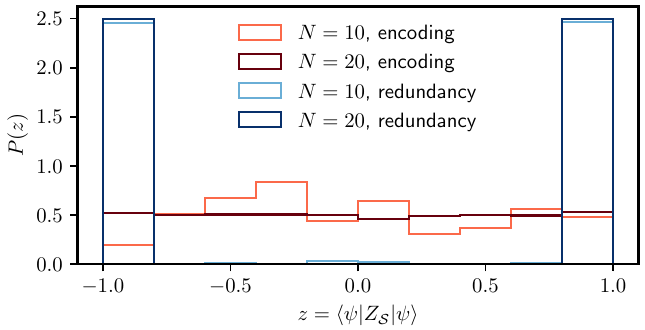}
		\caption{
			Distribution of expectation value $ \left< \psi | Z_{\mathcal{S}} | \psi \right>$ where $\psi$ is drawn from the projective ensemble obtained from measuring the environment in the computational basis in the final states in Fig.~\ref{fig:example} with $N = 20$ and $N = 10$ (same other parameters). In the encoding case 
			[Fig.~\ref{fig:example}-(a)], the distribution is approximately uniform; in the 
			redundancy case [Fig.~\ref{fig:example}-(b)], it is peaked at $\pm 1$.
		} \label{fig:proj}
	\end{figure}
	\textit{Pointer state and projective ensemble.}-- 
	We have focused on the mutual information $I(\mathcal{F}, \mathcal{S})$, which 
	quantifies the total system-environment correlation as a single number but is 
	nontrivial to estimate. A more practical probe of the correlation is the ensemble 
	of posteriori states of the system conditioned on projective measurements on the 
	environment, known as the projective ensemble~\cite{deep-therm,deep-therm-theory,deep-therm-ho}. A partial or weak measurement instead gives rise to the pointer state 
	ensemble~\cite{ferte2025decoherent,fertecaopra}.
	
	For the examples in Fig.~\ref{fig:example}, we construct the projective ensemble 
	via computational basis measurements. With redundancy, the ensemble tends to a sum 
	of delta peaks at the classical pointer states $|a\rangle\langle a|$. In the encoding 
	case, it tends to a uniform distribution on the Bloch sphere, as in deep 
	thermalization~\cite{deep-therm,deep-therm-theory}. The two cases are sharply 
	distinguished by the marginal distribution of $\langle\psi|Z_{\mathcal{S}}|\psi\rangle$, 
	where $|\psi\rangle$ is drawn from the ensemble: it is peaked at $\{-1,1\}$ in the 
	former and uniform on $[-1,1]$ in the latter. This distinction is already dramatic 
	in small systems, where the post-selection problem can be brute-forced.
	
	\textit{Discussion.}-- 
	We showed that redundancy robustly arises from two ingredients: a local conserved 
	quantity of the environment, and a ``broadcasting interaction'' with the system that 
	alters the environment macroscopically. Our results generalize directly to other 
	conserved quantities beyond energy. Thus, redundancy is more robust than suggested 
	by the existing literature. Yet the broadcasting interaction is a radical ingredient: 
	it is non-local and involves a hierarchy of scales, where a single system degree of 
	freedom acts as a strong point source affecting a large number of environment degrees 
	of freedom. Inflationary models~\cite{fertecaoprl,fertecaopra,ferte2025decoherent} 
	realize this hierarchy more gradually. This unusual situation is known to generate macroscopic fluctuations~\cite{nussinov20}: a single distant drive must collectively couple to many local degrees of freedom in order to alter an intensive quantity such as the energy density at a finite rate. In the current context, multiple components of the environment (including those associated with  macroscopic observers) are not inert. Instead, they collectively interact with the single monitored system which consequently imprints its state redundantly across them. Such macroscopic  fluctuations can 
	also arise from ``kinetically constrained'' \textit{local} interactions~\cite{fagotti1,fagotti2}, which we expect to be another mechanism of redundancy. An impurity may also induce redundancy by interacting locally with a \textit{zero}-temperature bath~\cite{milazzo-backflow}. It appears that redundancy, which plausibly underlies quantum measurements, necessitates a low-entropy arrangement of some kind. Formulating this surmise more precisely is an important open problem. Ref.~\cite{LLatune2025thermodynamically} took an interesting step in this direction using quantum thermodynamics~\cite{Campbell_2026_roadmap}.
	
	\begin{acknowledgments}
		We thank the  Laboratoire de Physique Th\'eorique et de la Mati\`ere Condens\'ee, where this work was initiated, for their hospitality. 
	\end{acknowledgments}
	
	\bibliography{ref}
	
	\pagebreak
	\section{End matter}
	\noindent 
	\textit{Proof of Proposition 1}.
	To start, we compute the reduced density matrix of $\mathcal{SF}$:
	\begin{align} 
		&\rho :=	\rho_{\mathcal{SF}} = \sum_a |c_a|^2  | a \rangle \langle a | \rho_{aa} + \sum_{a \ne b} c_a c_b^* 
		|a \rangle \langle b | \rho_{ab}  \\
		& \text{where }	\rho_{ab} := \mathrm{Tr}_{\mathcal{E} \setminus \mathcal{F}} \left[| \Phi_a(t) \rangle \langle \Phi_b(t) | \right].   \label{eq:rhoij}
	\end{align}
	We now consider applying the dephasing channel 
	$\mathbf{D}_{\mathcal{S}}$ to $\mathcal{S}$ in the computational basis, and the dephasing 
	channel $\mathbf{D}_{\mathcal{F}}$ to $\mathcal{F}$ in the eigenbasis of $H_{\mathcal{F}}$ 
	(the Hamiltonian restricted to $\mathcal{F}$).  As a result, we obtain the density matrix 
	\begin{align}
		\rho' :=& \mathbf{D}_{\mathcal{S}} \mathbf{D}_{\mathcal{F}}[\rho] \nonumber \\
		= & \sum_{a, \mu} |c_a|^2  |a, E_{\mu} \rangle \langle a, E_\mu |   
		\langle E_{\mu} | \rho_{aa} | E_{\mu} \rangle
	\end{align}
	where $| E_{\mu} \rangle$, $\mu = 1, \dots, D$, are the energy eigenstates of $H_{\mathcal{F}}$, 
	and $D$ is the Hilbert space dimension of $\mathcal{F}$. As is well known~\cite{nielsen_chuang_2010}, dephasing cannot decrease the entropy of $\mathcal{S}$,
	\begin{equation} \label{eq:boundH}
		H(\mathcal{S}) \le  H'(\mathcal{S}) =   - \sum_{a} |c_a|^2 \ln |c_a|^2 ,
	\end{equation}
	and the mutual information cannot increase,
	\begin{equation} \label{eq:bound}
		I(\mathcal{F}, \mathcal{S})  \ge 	I'(\mathcal{F}, \mathcal{S}). 
	\end{equation}
	Here and in what follows, primed quantities denote computations performed with  
	$\rho'$.
	
	The density matrix $\rho'$ is diagonal in the basis $| a, E_\mu \rangle$, 
	$a = 0, \dots, d-1$, $\mu = 1, \dots, D$. It describes two classical random 
	variables $a$ and $\epsilon$ with the following joint law: We draw $a \in \{0, \dots, d-1\}$ 
	with probability $|c_a|^2$, and then, conditioned on $a$, draw $\epsilon = E_\mu / n$ 
	with probability $\langle E_{\mu} | \rho_{aa} | E_{\mu} \rangle$. Therefore, the joint 
	distribution is
	\begin{align} 
		p(a, \epsilon) := |c_a|^2 p_a (\epsilon) 
	\end{align}   
	where $p_a(\epsilon) = \langle \Phi_a(t)| \delta(H_\mathcal{F} - n \epsilon) 
	| \Phi_a(t) \rangle = \mathrm{Tr}[\rho_{aa} \delta(H_\mathcal{F} - n \epsilon)]$ 
	is the energy density distribution in $\mathcal{F}$ on $n$ qubits of the environment, as defined in \eqref{eq:large-deviation}. In the large $n$ limit, the sum over energy eigenstates $\mu$ can be approximated by an integral over the rescaled energy $\epsilon$, with $p_a(\epsilon)$ playing 
	the role of the continuous energy density, as made precise by the large deviation 
	assumption \eqref{eq:large-deviation}. 
	The mutual information $I'(\mathcal{F}, \mathcal{S})$ is the classical Shannon mutual 
	information between the random variables $\epsilon$ and $a$.
	
	Consequently, we have
	\begin{align}
		&\delta := - \sum_a |c_a|^2 \ln |c_a|^2  - I'(\mathcal{F}, \mathcal{S})  \nonumber \\ 
		=&   \sum_a  |c_a|^2 \int \mathrm{d} \epsilon \, p_a(\epsilon)   \ln \left(1 +  \sum_{b \ne a} \frac{|c_b|^2 p_b(\epsilon) }{ |c_a|^2 p_a(\epsilon)} \right). \label{eq:mutualclassical}
	\end{align}	
	Next we show that $\delta$ is exponentially small in $n$. We use the assumption 
	\eqref{eq:large-deviation}, $p_a(\epsilon) \sim e^{-n f_a(\epsilon)}$, to estimate 
	the logarithm in \eqref{eq:mutualclassical}:
	\begin{align}
		& \ln \left(1 +  \sum_{b \ne a} \frac{|c_b|^2 p_b(\epsilon) }{ |c_a|^2 p_a( \epsilon)} \right)  \nonumber \\ 
		\sim & \begin{cases}
			n  & \exists b,  f_b(\epsilon) \le f_a(\epsilon),   \\ 
			e^{- n ( \min_{b \ne a} f_b(\epsilon) - f_a(\epsilon)) }   & \forall b,  f_b(\epsilon) > f_a(\epsilon).
		\end{cases} \label{eq:log-est}
	\end{align}
	Indeed, in the first case, the argument of the logarithm is $\sim e^{\lambda n}$ for some 
	$\lambda \ge 0$, so that the logarithm itself is $O(n)$; in the second case, the argument is 
	$1 + O(e^{-\lambda n})$ for $\lambda > 0$, so the log is $\sim e^{-\lambda n}$, 
	which is exponentially small. This leads the following estimate for the integrand in \eqref{eq:mutualclassical},
	\begin{align} 
		\label{eq:estimate}
		& \sum_{a} |c_a|^2 p_a(\epsilon n) \ln \left(1 +  \sum_{b \ne a} 
		\frac{|c_b|^2 p_b(\epsilon n) }{ |c_a|^2 p_a(\epsilon n)} \right) \nonumber  \\
		\sim & \exp\left[- n \min_a \max (f_a(\epsilon) , \min_{b\ne a} f_b(\epsilon)) \right] 
		\nonumber   \\
		\sim & \exp\left[- n \min_{a \ne b} \max(f_a(\epsilon) , f_b(\epsilon)) \right]. 
	\end{align}
	This implies that $\delta \sim e^{-\alpha_* n}$, with $\alpha_*$ defined in 
	\eqref{eq:alpha}. It follows that for any $\alpha < \alpha_*$, there exists $C > 0$ 
	such that
	$$
	I'(\mathcal{F}, \mathcal{S}) \ge H'(\mathcal{S}) - C e^{-\alpha n}.
	$$
	Fusing this bound with Eqs. (\ref{eq:boundH}, \ref{eq:bound}) leads to \begin{align}
		I(\mathcal{F}, \mathcal{S}) \ge I'(\mathcal{F}, \mathcal{S})  \ge H'(\mathcal{S}) 
		- C e^{-\alpha n} \ge H(\mathcal{S}) - C e^{-\alpha n}, \label{eq:final}
	\end{align}
	which gives the lower bound for $I(\mathcal{F}, \mathcal{S})$ in \eqref{eq:plateaumain}.
	
	To demonstrate the upper bound of   \eqref{eq:plateaumain} on $I(\mathcal{F}, \mathcal{S})$, we turn to the following argument. Since the system-environment hybrid 
	$\mathcal{ES}$ is in a pure state (so that $H(\mathcal{ES}) = 0$ and 
	$H(\mathcal{E}) = H(\mathcal{S})$), we have the identity 
	$I(\mathcal{F}, \mathcal{S}) + I(\mathcal{E} \setminus \mathcal{F}, \mathcal{S}) = 2H(\mathcal{S})$, 
	which combined with \eqref{eq:final} implies 
	$I(\mathcal{E} \setminus \mathcal{F}', \mathcal{S}) \le H(\mathcal{S}) + C e^{-\alpha n'}$ 
	for any fraction $\mathcal{F}'$ of size $n'$. Taking $\mathcal{F}'$  
	such that 
	$\mathcal{F} \subset \mathcal{E} \setminus \mathcal{F}'$, so that 
	$n' = N - n$, and using the fact that mutual information increases when enlarging 
	a subsystem, we arrive at the upper bound:
	\begin{equation}\label{eq:final1}
		I(\mathcal{F}, \mathcal{S}) \le H(\mathcal{S}) + C e^{-\alpha(N-n)}.
	\end{equation}
	This completes the proof. 
	
	We remark that, combining \eqref{eq:final}, \eqref{eq:final1} gives us 
	\begin{equation}
		H'(\mathcal{S}) - H(\mathcal{S})  \le C (e^{-\alpha (N-n) }   + e^{-\alpha n})
	\end{equation}
	for any $1 \ll n \ll N$. Since the left hand side is independent of $n = | \mathcal{F}|$, it must decay exponentially in $N$, $ H'(\mathcal{S}) - H(\mathcal{S}) = O(e^{- \gamma N})$ for some $\gamma$.

	\noindent\textit{Partial degeneracy.} When some (but not all) energy densities coincide, we expect the broadcast of part of the information carried by $\mathcal{S}$, and a redundancy plateau at a different value, which is given by the Shannon entropy of the distribution $\{  |c_a|^2 \}_{a=0}^{d-1}$ coarse-grained by merging entries of the same energy, 
	\begin{equation} \label{eq:plateau-partial}
		I(\mathcal{F}, \mathcal{S}) \approx   - \sum_{\epsilon \in \{ \epsilon_a \}}  p_{\epsilon} \ln p_{\epsilon}, p_{\epsilon} = \sum_{\epsilon_a = \epsilon}  |c_a|^2 .
	\end{equation}
	Indeed, observing the environment fraction enables distinguishing branches of the wave  function with distinct energy but cannot resolve those with the same energy. Eq.~\eqref{eq:plateau-partial} is expected to hold when $1 \ll |\mathcal{F}| < |\mathcal{E}|/2$, with an ``encoding jump'' occurring as $|\mathcal{F}|$ approaches this upper bound.
	
	To illustrate the above, we consider an example of the general setup with $d = 3$, $| \Phi_{0,2} \rangle = ((| 0 \rangle \pm i | 1 \rangle ) / \sqrt{2})^{\otimes N}$, $| \Phi_1 \rangle= | 0 \rangle^{\otimes N}$, and $c_a = 1/\sqrt{3}$. The environment Hamiltonian is again \eqref{eq:HE-ex}. Thus, the energy densities are redundant, $ 0 = \epsilon_0 = \epsilon_2 \ne \epsilon_1$. We observe in Fig~\ref{fig:exampled3} that the initial redundancy plateau with value $\ln(3)$ gradually crossovers to a partial redundancy plateau with value predicted by \eqref{eq:plateau-partial}, $s = \ln (3)/ 3 + 2 \ln (3/2) /3$. Due to the slow approach to the plateau and the ``encoding jump'' at $N / 2$, the behavior of $I(\mathcal{F}, \mathcal{S})$ is quite complex, and would have been hard to understand without the above analytical insights.

	\begin{figure}
		\centering
		\includegraphics[width =.9 \columnwidth]{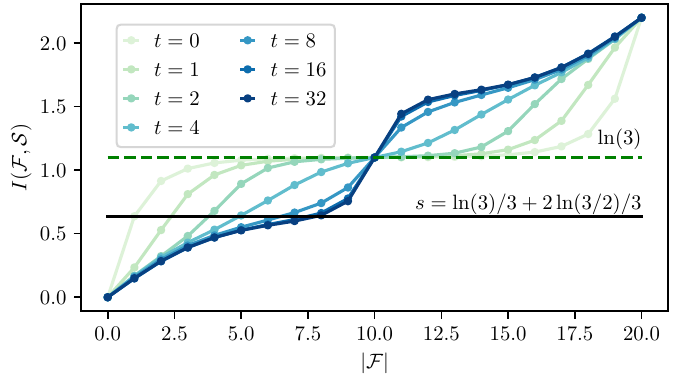}
		\caption{System-fraction mutual information with partial energy degeneracy. At late times, a barely visible partial redundancy plateau at $s = \ln (3)/ 3 + 2 \ln (3/2) /3 $ is followed by an encoding jump at $|\mathcal{F}| = |\mathcal{E}| / 2$, where $|\mathcal{E}| = 20.$}\label{fig:exampled3}
	\end{figure}

\end{document}